\providecommand{\tabularnewline}{\\}
\numberwithin{equation}{section}
\numberwithin{figure}{section}
\theoremstyle{plain}
\newtheorem{thm}{\protect\theoremname}[section]
  \theoremstyle{definition}
  \newtheorem{defn}[thm]{\protect\definitionname}
  \theoremstyle{plain}
  \newtheorem{cor}[thm]{\protect\corollaryname}
  \theoremstyle{plain}
  \newtheorem{prop}[thm]{\protect\propositionname}
  \theoremstyle{remark}
  \newtheorem{claim}[thm]{\protect\claimname}
  \providecommand{\claimname}{Claim}
  \providecommand{\corollaryname}{Corollary}
  \providecommand{\definitionname}{Definition}
  \providecommand{\propositionname}{Proposition}
\providecommand{\theoremname}{Theorem}
\begin{document}

\title{Expected Coverage of Random Walk Mobility Algorithm}

\author{Moritz Kohls} 
\address{Department of Electrical Engineering and Information Technology\newline
	Technische Universität Dortmund, Germany}
\email{moritz.kohls@udo.edu}

\author{Tanja Hernandez}
\email{tanja.hernandez@hs-owl.de}

\begin{abstract}
Unmanned aerial vehicles (UAVs) have been increasingly used for exploring
areas. Many mobility algorithms were designed to achieve a fast coverage
of a given area. We focus on analysing the expected coverage of the
symmetric random walk algorithm with independent mobility. Therefore
we proof the dependence of certain events and develop Markov models,
in order to provide an analytical solution for the expected coverage.
The analytic solution is afterwards compared to those of another work
and to simulation results.
\end{abstract}

\maketitle

\section{Introduction}

There are several reasons to use UAVs instead of helicopters for exploring
the area of interest. The most important advantage should be the safety.
The life of the crew is risked, when the helicopter comes into operation
in dangerous places. In some difficult surroundings there might be
too many obstacles nearby, where the helicopter cannot provide sufficient
mobility to discover the area. Even in flat terrain it argues for
the commitment of drones, because a helicopter flight is associated
with great costs. In case of emergency it is necessary to discover
the place of interest as fast as possible. A network of UAVs can work
together to achieve this goal faster than a single helicopter could.

One of the simplest mobility algorithms for this purpose is the bordered
symmetric random walk. However, it may be supposed that
this algorithm provides a rather weak coverage performance.
The aim of this work is therefore to deduce an analytical solution for the expected
coverage performance of this algorithm, so that one can compare the coverage
performance to those of other mobility algorithms. In especially,
our results of the expected coverage of the bordered symmetric random
walk are compared to those of another paper \cite{key-1}, to check the validation
of the analytical results.

The composition of this work is made as follows. In chapter 2 related
work is presented, in which another analytic solution is already contained.
Chapter 3 involves the analysis of the random walk mobility algorithms.
At first, bordered and boundless Markov chains for the mobility algorithms
are proposed. Subsequently, the coverage performance using the example
of expected coverage is calculated. Finally, the two analytical solutions
are compared to the simulation results.

\section{Related work}

An analytical approach to the calculation of the expected coverage
is given in \cite{key-1}. A stochastic process is defined, which
enables the authors to provide a formula for coverage metrics like
expected coverage and full coverage probability. However, the formulas
of the metrics derived from the examined Markov model base upon not
mentioned preconditions, which do not apply to Markov models generally.
By taking the preconditions of the Markov model into consideration,
the analytical solution for the expected coverage becomes considerably
more complicated. The exposure to Markov chains is executed in \cite{key-2}.
The numeric programming is accomplished by statistical software R
\cite{key-3}.

\section{Analysis of the Markov chains}

\subsection{Markov Chain}

At first, it is necessary to introduce the mobility algorithm that
is investigated in this paper. Foremost, we focus on examining the
mobility algorithm in a two-dimensional exploration area. Indeed, the
three-dimensional case provides similar results, but with the great
drawback that with an increasing number of states the
computation time for realistic large areas increases unnecessarily. Another
practical problem of the three-dimensional model would be the huge
size of the required matrices, which have to be multiplied for the analytic
solution. This often exceeds the limits of the utilised numerical
software. Thus, for now we constrain the exploration area of the UAVs
to a two-dimensional lattice. For convenience, we determine the position
of the drones as a finite set of states where they can be located
on discrete time points. The units of location and time can be chosen
arbitrarily.

The UAVs move randomly, therefore we deal with a discrete-value and
discrete-time stochastic process, which equates to a homogeneous and
discrete-time Markov chain $M:=(E,\pi_{0},P)$. The state space $E$
contains every point on a two-dimensional grid with width $w$ and
depth $d$ for the two-dimensional case and additionally height $h$
for the three-dimensional case. One could define these points as two-dimensional
vectors $(i,j)$ for $i=1,\ldots,w$ and $j=1,\ldots,d$, however,
for the mathematical handling we need to sequence the individual states
in order to receive a one-dimensional numbering. For this purpose
we sequence the states on the considered rectangular grid as in table
\ref{tab:Sequence-of-all}. An algorithm for sequencing an arbitrary
three-dimensional grid is provided in the appendix.
\begin{table}
\begin{tabular}{|c|c|c|c|c|c|}
\hline 
$1$ & $2$ & $\cdot\cdot\cdot\cdot\cdot\cdot\cdot\cdot\cdot\cdot$ & $\cdot\cdot\cdot\cdot\cdot\cdot\cdot\cdot\cdot\cdot$ & $w-1$ & $w$\tabularnewline
\hline 
$w+1$ & $w+2$ & $\cdot\cdot\cdot\cdot\cdot\cdot\cdot\cdot\cdot\cdot$ & $\cdot\cdot\cdot\cdot\cdot\cdot\cdot\cdot\cdot\cdot$ & $2w-1$ & $2w$\tabularnewline
\hline 
$\cdot\cdot\cdot\cdot\cdot\cdot\cdot\cdot\cdot\cdot$ & $\cdot\cdot\cdot\cdot\cdot\cdot\cdot\cdot\cdot\cdot$ & $\cdot\cdot\cdot\cdot\cdot\cdot\cdot\cdot\cdot\cdot$ & $\cdot\cdot\cdot\cdot\cdot\cdot\cdot\cdot\cdot\cdot$ & $\cdot\cdot\cdot\cdot\cdot\cdot\cdot\cdot\cdot\cdot$ & $\cdot\cdot\cdot\cdot\cdot\cdot\cdot\cdot\cdot\cdot$\tabularnewline
\hline 
$\cdot\cdot\cdot\cdot\cdot\cdot\cdot\cdot\cdot\cdot$ & $\cdot\cdot\cdot\cdot\cdot\cdot\cdot\cdot\cdot\cdot$ & $\cdot\cdot\cdot\cdot\cdot\cdot\cdot\cdot\cdot\cdot$ & $\cdot\cdot\cdot\cdot\cdot\cdot\cdot\cdot\cdot\cdot$ & $\cdot\cdot\cdot\cdot\cdot\cdot\cdot\cdot\cdot\cdot$ & $\cdot\cdot\cdot\cdot\cdot\cdot\cdot\cdot\cdot\cdot$\tabularnewline
\hline 
$(d-2)w+1$ & $(d-2)w+2$ & $\cdot\cdot\cdot\cdot\cdot\cdot\cdot\cdot\cdot\cdot$ & $\cdot\cdot\cdot\cdot\cdot\cdot\cdot\cdot\cdot\cdot$ & $(d-1)w-1$ & $(d-1)w$\tabularnewline
\hline 
$(d-1)w+1$ & $(d-1)w+2$ & $\cdot\cdot\cdot\cdot\cdot\cdot\cdot\cdot\cdot\cdot$ & $\cdot\cdot\cdot\cdot\cdot\cdot\cdot\cdot\cdot\cdot$ & $dw-1$ & $dw$\tabularnewline
\hline 
\end{tabular}

\medskip{}

\caption{Sequence of all $dw$ states of the two-dimensional exploration area
with width $w$ and depth $d$. \label{tab:Sequence-of-all}}
\end{table}

So, the state space consists of $d\cdot w$ single states $E:=\{1,\ldots,dw\}$.
The starting distribution $\pi_{0}$ contains the probabilities for
the points, in which the drones start their flight at point in time $0$. If all UAVs start from the same point $i\in\{1,\ldots,dw\}$,
we would choose $\pi_{0}$ as the unit vector $e_{i}$. If one refuses
to decide on a deterministic starting point, $\pi_{0}:=\left(\frac{1}{dw},\ldots,\frac{1}{dw}\right)$
is a reasonable election for the starting distribution, because then
every state has an equal probability for the start.

The examined mobility algorithm determines the transition probabilities
$p_{ij}$ for a transition from state $i$ to state $j$. The transition
probabilities form the transition matrix $P$, which has as many rows
and columns as the Markov chain has different states, thus $dw$ for
the two-dimensional example. The distance between two places can be
calculated by using the taxicab geometry. So the drones are only allowed
to fly into at most four different directions in a two-dimensional
world and six directions (dimensions times two) in the three-dimensional
world.

\selectlanguage{british}%
\begin{figure}
\noindent \begin{centering}
\includegraphics[width=9cm,height=6cm]{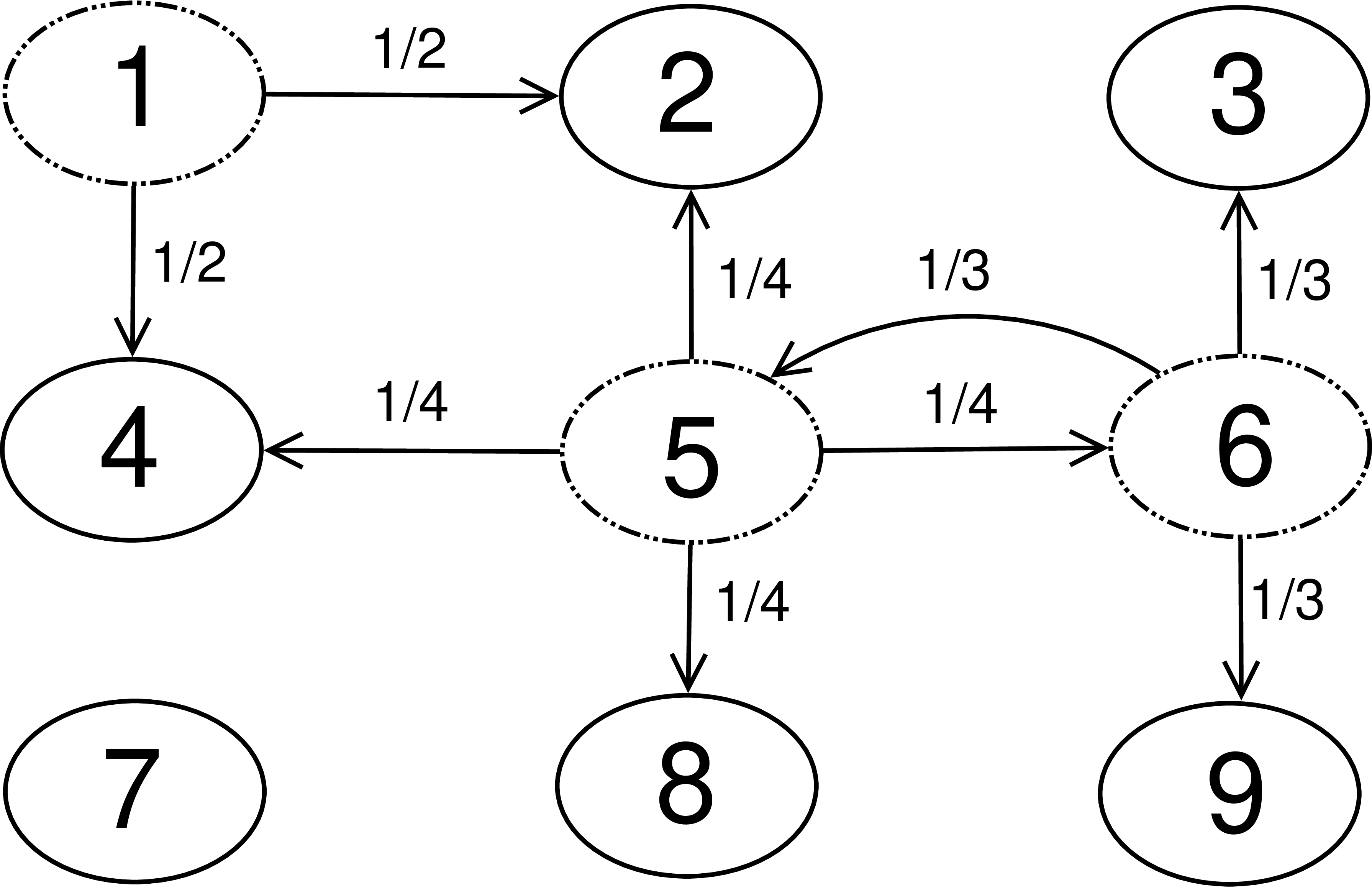}
\par\end{centering}

\selectlanguage{british}%
\medskip{}

\selectlanguage{british}%
\caption{\selectlanguage{british}%
Transition probabilities of the bordered symmetric random walk of
the states $1$, $5$ and $6$. \label{fig:Transition-probabilities-of}\selectlanguage{ngerman}%
}
\end{figure}

\begin{figure}
\includegraphics[width=9.75cm,height=6.5cm]{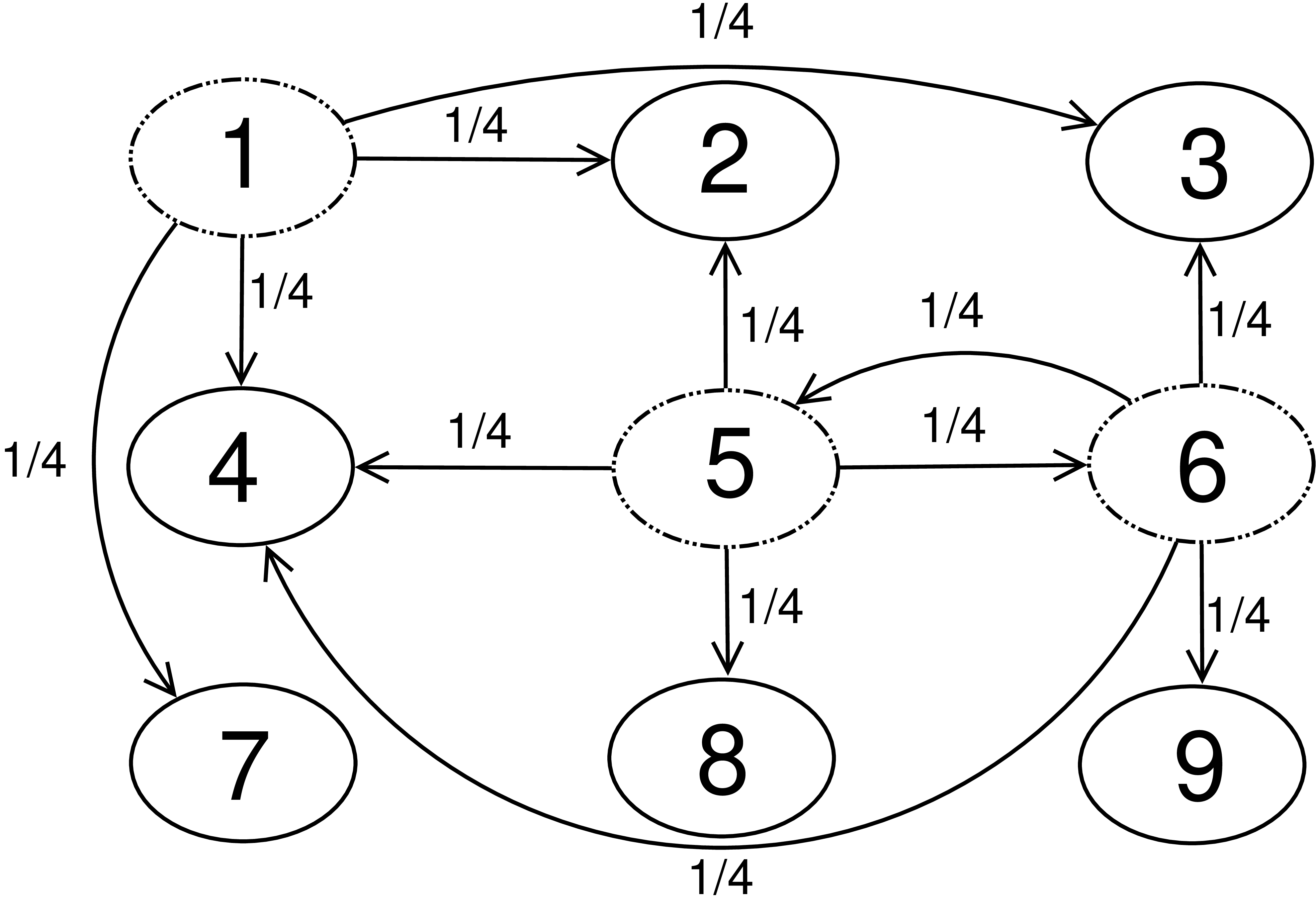}

\selectlanguage{british}%
\medskip{}

\selectlanguage{british}%
\caption{\selectlanguage{british}%
Transition probabilities of the symmetric random walk without borders
of the states $1$, $5$ and $6$. \label{fig:Transition-probabilities-of-1}\selectlanguage{ngerman}%
}
\end{figure}

\selectlanguage{british}%
Figure \ref{fig:Transition-probabilities-of} shows the transition
probabilities for the firstly named application. Inside of the exploration
area there is an equal probability of $\frac{1}{4}$ for flying forward,
backward, to the left or right on a symmetric random walk. On the
brinks of the area three directions remain and in the corners only
two flight directions are still possible, at least when borders exist
which should be standard case. Without borders the drones can leave
the grid, only to reappear on the opposite side (see figure \ref{fig:Transition-probabilities-of-1}).
In this case the transition probabilities remain at $\frac{1}{4}$
even for the marginal and corner states. For example, we present the
transition matrix of the $3\times3$ bordered symmetric random walk
in table \ref{tab:Transition-matrix-of}, whereupon entries with probability
$0$ are omitted on grounds of clarity.

\begin{table}
$\begin{pmatrix} & \frac{1}{2} &  & \frac{1}{2}\\
\frac{1}{3} &  & \frac{1}{3} &  & \frac{1}{3}\\
 & \frac{1}{2} &  &  &  & \frac{1}{2}\\
\frac{1}{3} &  &  &  & \frac{1}{3} &  & \frac{1}{3}\\
 & \frac{1}{4} &  & \frac{1}{4} &  & \frac{1}{4} &  & \frac{1}{4}\\
 &  & \frac{1}{3} &  & \frac{1}{3} &  &  &  & \frac{1}{3}\\
 &  &  & \frac{1}{2} &  &  &  & \frac{1}{2}\\
 &  &  &  & \frac{1}{3} &  & \frac{1}{3} &  & \frac{1}{3}\\
 &  &  &  &  & \frac{1}{2} &  & \frac{1}{2}
\end{pmatrix}$

\medskip{}

\caption{Transition matrix of the bordered symmetric random walk with width
and depth 3. \label{tab:Transition-matrix-of}}
\end{table}

\subsection{Calculation of the Expected Coverage}

It takes some definitions concerning random variables to calculate the expected
coverage of the symmetric random walk $\left(X_{n}\right)_{n\in\mathbb{N}_{0}}$.
Let $C_{n,z}$ indicate, if drone $d$ has visited state $z$ until
point in time $n$, thus whether a coverage occurred ($C_{n,z}=1$)
or not. So, it is essential that
\begin{equation}
C_{n,z}:=\begin{cases}
1 & if\,X_{0}=z\vee X_{1}=z\vee...\vee X_{n}=z\\
0 & else
\end{cases}.
\end{equation}
These random variables are Bernoulli-distributed with for now unknown
parameters $p_{n,z}$. It should be mentioned that for a constant
point in time $n$ the random variables $C_{n,z}$ for $z=1,\ldots,|E|$
neither are stochastically independent nor identical distributed which
can be proven trivially. The same proposition can be made for all
$n\in\mathbb{N}_{0}$, if a particular state $z$ is considered. We
will now confirm the dependence of the random variables in the last
named case. But primarily, some characteristics of probability theory
are needed.
\begin{defn}
Independence of Events

Let $A_{1},\ldots,A_{n}$ be events at the same probability space.
The events $A_{1},\ldots,A_{n}$ are stochastically independent if
and only if for all $k\in\{1,...,n\}$ and every index space $I=\{i_{1},...i_{k}\}\subseteq\{1,...,n\}$
of the cardinality $k$ 
\begin{equation}
P(A_{i_{1}}\cap...\cap A_{i_{k}})=P(A_{i_{1}})\cdot...\cdot P(A_{i_{k}})
\end{equation}
is valid.\end{defn}
\begin{cor}
\label{cor:Independence-of-Events}Independence of Events

The events $A$ and $B$ are stochastically independent, if and only
if $P(B)=0$ or the equation 
\begin{equation}
P(A|B)=P(A)
\end{equation}
 is satisfied for $P(B)>0$.\end{cor}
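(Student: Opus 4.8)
The plan is to reduce everything to the multiplicative characterisation of independence — by the Definition above applied to $n=2$, the events $A$ and $B$ are independent precisely when $P(A\cap B)=P(A)\cdot P(B)$ — and then to split the argument according to whether $P(B)$ vanishes. So I would first record that equivalence, and then prove the two disjuncts on the right-hand side cover exactly the independent case.

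First I would dispose of the case $P(B)=0$. Since $A\cap B\subseteq B$, monotonicity of a probability measure gives $0\le P(A\cap B)\le P(B)=0$, hence $P(A\cap B)=0=P(A)\cdot P(B)$. Thus whenever $P(B)=0$ the events $A$ and $B$ are automatically independent, which accounts for the first disjunct. Next, assume $P(B)>0$, so that the conditional probability $P(A|B)=P(A\cap B)/P(B)$ is well-defined. Multiplying the identity $P(A|B)=P(A)$ through by the nonzero number $P(B)$ shows it is equivalent to $P(A\cap B)=P(A)\cdot P(B)$, i.e.\ to independence of $A$ and $B$. Putting the two cases together: $A$ and $B$ are independent if and only if $P(B)=0$, or $P(B)>0$ and $P(A|B)=P(A)$ — which is the assertion. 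The converse direction needs no extra work: if $A$ and $B$ are independent, then either $P(B)=0$ (first disjunct holds), or $P(B)>0$ and dividing $P(A\cap B)=P(A)\cdot P(B)$ by $P(B)$ yields $P(A|B)=P(A)$.

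There is essentially no analytic obstacle here; the only point that needs care — and the reason the statement is phrased with a case distinction in the first place — is that $P(A|B)$ is defined only when $P(B)>0$, so the equation $P(A|B)=P(A)$ cannot by itself capture independence when $P(B)=0$. Handling that boundary case via the inclusion $A\cap B\subseteq B$ is the one genuine step, and it is immediate.
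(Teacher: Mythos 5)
Your proof is correct and is exactly the standard reduction to the product criterion $P(A\cap B)=P(A)\cdot P(B)$ that the paper implicitly relies on (the corollary is stated there without an explicit proof). The case split on $P(B)=0$ versus $P(B)>0$, with monotonicity handling the degenerate case, is precisely what is needed and nothing is missing.
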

\begin{prop}
State Probabilities of a Markov Chain

The starting distribution $\pi_{0}$ is a $|E|$- dimensional row
vector and contains the probabilities for starting in each of the
states $z\in E$. The state probabilities after $n$ time steps are
calculated by
\[
\pi_{n}=\pi_{0}\cdot P^{n}.
\]

\end{prop}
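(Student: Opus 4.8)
The plan is to prove the identity by induction on $n$, after fixing the natural interpretation that $\pi_n$ denotes the $|E|$-dimensional row vector whose $z$-th entry equals $P(X_n=z)$; for $n=0$ this is exactly the definition of the starting distribution. The base case $n=0$ is then immediate, since $P^0=I$ gives $\pi_0\cdot P^0=\pi_0$.

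For the inductive step I would assume $\pi_n=\pi_0\cdot P^n$ and compute $P(X_{n+1}=j)$ for an arbitrary fixed state $j\in E$. Because the events $\{X_n=i\}$, $i\in E$, form a partition of the sample space, the law of total probability yields
\[
P(X_{n+1}=j)=\sum_{i\in E}P(X_{n+1}=j\mid X_n=i)\,P(X_n=i),
\]
the sum effectively running over those $i$ with $P(X_n=i)>0$. Here I would invoke the defining Markov property together with the time-homogeneity of $M=(E,\pi_0,P)$ to replace $P(X_{n+1}=j\mid X_n=i)$ by the entry $p_{ij}$ of $P$, independently of $n$. Recognising the resulting sum $\sum_{i\in E}(\pi_n)_i\,p_{ij}$ as the $j$-th component of the vector-matrix product $\pi_n\cdot P$, and then substituting the induction hypothesis, gives $(\pi_{n+1})_j=(\pi_0\cdot P^n\cdot P)_j=(\pi_0\cdot P^{n+1})_j$; since $j$ was arbitrary this closes the induction and hence establishes $\pi_n=\pi_0\cdot P^n$ for all $n\in\mathbb{N}_0$.

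The only slightly delicate point — and it is minor — is the treatment of states $i$ with $P(X_n=i)=0$, for which the conditional probability $P(X_{n+1}=j\mid X_n=i)$ is undefined: such indices contribute a zero summand and may be included or dropped harmlessly, so summation over all of $E$ is legitimate. Apart from this bookkeeping and the explicit appeal to homogeneity (which is already encoded in the notation $M=(E,\pi_0,P)$ with a single, time-independent transition matrix $P$), the argument is entirely routine, and I do not expect any substantive obstacle.
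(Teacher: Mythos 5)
Your induction via the law of total probability and the Markov property is the standard, correct proof of this identity, and your remark about states with $P(X_n=i)=0$ contributing zero summands is exactly the right way to dispose of the only technical wrinkle. The paper itself states this proposition without any proof, treating it as a known fact from Markov chain theory (cf.\ its reference to Br\'emaud), so there is nothing to compare against; your argument is the canonical one and is complete.
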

Below we proof the mentioned claim, which is the central conclusion
of this paper.
\begin{prop}
\label{prop:Dependence-of-Coverage}Dependence of Coverage Events
(1)
\end{prop}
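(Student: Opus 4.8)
The plan is to exhibit, for a concrete small instance of the bordered symmetric random walk, two coverage events of the form $\{C_{n,z}=1\}$ that fail the independence criterion of Corollary~\ref{cor:Independence-of-Events}. Since the proposition concerns a fixed state $z$ while $n$ varies, I would take two time indices $m<n$ and the \emph{same} state $z$, and show that the events $A:=\{C_{m,z}=1\}$ and $B:=\{C_{n,z}=1\}$ are not stochastically independent. The natural choice is to work with the $3\times 3$ grid whose transition matrix is displayed in Table~\ref{tab:Transition-matrix-of}, pick the deterministic start $\pi_0=e_i$ for a convenient corner state $i$, and take the target state $z$ to be a state that cannot be reached in one step but can be reached in two.

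First I would observe the elementary monotonicity $\{C_{m,z}=1\}\subseteq\{C_{n,z}=1\}$ for $m\le n$, which follows directly from the definition of $C_{n,z}$ as a disjunction over $X_0,\dots,X_n$. Hence $A\cap B=A$ and $P(A\cap B)=P(A)=p_{m,z}$. Independence would therefore force $p_{m,z}=p_{m,z}\cdot p_{n,z}$, i.e. either $p_{m,z}=0$ or $p_{n,z}=1$. So the whole argument reduces to choosing the instance so that $0<p_{m,z}$ and $p_{n,z}<1$ simultaneously. Equivalently, using Corollary~\ref{cor:Independence-of-Events} in its conditional form: if $P(B)>0$ one needs $P(A\mid B)=P(A)$, but here $P(A\mid B)=P(A)/P(B)=p_{m,z}/p_{n,z}$, and this equals $p_{m,z}$ only in the degenerate cases just named.

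Concretely I would take $m=1$ and $n=2$, start in corner state $1$, and let $z$ be a state at taxicab distance $2$ from state $1$ (for instance state $3$ or state $5$ in the numbering of Table~\ref{tab:Sequence-of-all}). Then $p_{1,z}=P(X_0=z\vee X_1=z)=0$ is \emph{not} what I want; instead I should pick $z$ adjacent to the start, say $z=2$, and $m=1$, so that $p_{1,z}=P(X_1=2\mid X_0=1)=\tfrac12>0$, while $p_{2,z}=P(X_1=2\vee X_2=2)<1$ since there is positive probability that $X_1\neq 2$ and $X_2\neq 2$ (e.g.\ the path $1\to 4\to 7$). Reading these probabilities off $P$ and $P^2$ via the Proposition on state probabilities gives explicit numbers with $0<p_{1,z}<p_{2,z}<1$, so $p_{1,z}\neq p_{1,z}p_{2,z}$, contradicting independence. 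The same template yields the general statement: for every grid of size at least $2$ and every non-absorbing configuration one can find such $m,n,z$, so the coverage events indexed by $n$ for fixed $z$ are dependent.

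The main obstacle is not mathematical depth but \emph{making the counterexample clean and self-contained}: I must pin down a specific grid, a specific $\pi_0$, and specific states, then verify the two inequalities $p_{m,z}>0$ and $p_{n,z}<1$ without hand-waving — the first is immediate from a single admissible transition, the second requires exhibiting one length-$n$ path that avoids $z$ entirely and noting it has positive probability. A secondary point to handle carefully is the logical structure of Corollary~\ref{cor:Independence-of-Events}: I should explicitly dispatch the trivial clause ($P(B)=0$) by noting $P(B)=p_{n,z}>0$ in the chosen instance, so that the only route to independence is the equation $P(A\mid B)=P(A)$, which I then refute by direct computation.
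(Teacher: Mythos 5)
There is a genuine gap: you have proved a different statement from the one in the paper. Proposition \ref{prop:Dependence-of-Coverage} asserts that the \emph{single-time location events} $\{X_{0}=z\},\ldots,\{X_{n}=z\}$ are stochastically dependent; these are the events whose joint independence (after complementation, via Claim \ref{claim:The-events-} and Theorem \ref{thm:Dependence-of-Coverage}) would be needed to justify the product formula $P(C_{n,z}=1)=1-\prod_{t=0}^{n}\left(1-P(X_{t}=z)\right)$ from \cite{key-1}, and refuting that formula is the entire purpose of the proposition. You instead show that the \emph{cumulative} indicators $\{C_{m,z}=1\}$ and $\{C_{n,z}=1\}$ for $m<n$ are dependent. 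Because these events are nested, that claim is essentially automatic (any $A\subseteq B$ with $P(A)>0$ and $P(B)<1$ are dependent), and it says nothing about whether the probabilities of the events $\{X_{t}=z\}$ factorize; in particular your key step $A\cap B=A$ has no analogue for the events $\{X_{t}=z\}$, which are not nested --- consecutive ones are in fact \emph{disjoint}, since the walk never stays put.

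That disjointness is exactly the paper's route: since $P(X_{t}=z,X_{t+1}=z)=0$, dependence follows as soon as $P(X_{t}=z)>0$ and $P(X_{t+1}=z)>0$ hold simultaneously, which the paper arranges whenever the starting distribution charges two states at odd taxicab distance (its Case 1, and the boundless walk of Case 3). The genuinely delicate situation, which your argument does not touch, is the periodic bordered walk with a deterministic start: there $P(X_{t}=z)$ and $P(X_{t+1}=z)$ are never simultaneously positive, so the paper must instead compare $P(X_{m+2}=z\mid X_{m}=z)=P^{2}(z,z)$ with $P(X_{m+2}=z)$, computing $P^{2}(z,z)=\frac{1}{3}$ for a corner state and bounding the unconditional probability strictly below $\frac{1}{3}$ by a symmetry and averaging argument. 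To repair your write-up, redirect the counterexample at the events $\{X_{t}=z\}$ themselves: either exhibit times $s<t$ and a state $z$ with $P(X_{s}=z)\cdot P(X_{t}=z)>0=P(X_{s}=z,X_{t}=z)$, or carry out the two-step conditional comparison for the deterministic-start case.
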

The events $\{X_{0}=z\},...,\{X_{n}=z\}$ are stochastically dependent
for $n\geq3$ and at least one state $z$.
\begin{proof}
The exploration area of the UAVs shall be big enough on grounds of
realistic scenarios which requires a minimum cell count of five in
each dimension. Apparently, the Markov chain holds $P(X_{t}=z,X_{t+1}=z)=0$
for every point in time and state, because the drones keep in motion.

\emph{Case 1:} The starting distribution has at least two states with
an uneven Manhattan-distance to each other, that both have a positive
probability on point in time zero.

In this case all states have a positive probability from a particular
point in time onwards. Mathematically expressed, $P(X_{t}=z)>0$ and $P(X_{t+1}=z)>0$
apply from a particular point in time $n$ 
 for every state $z$. 
In worst-case, $n$ constitutes still less than width + depth for the two-dimensional
case respectively width + depth + height for the three-dimensional
case. The worst-case occurs, if the UAVs can only start from a corner,
however state $z$ is located in the opposite corner. Not later than
at this particular time applies 
\[
P(X_{t}=z,X_{t+1}=z)=0<P(X_{t}=z)\cdot P(X_{t+1}=z),
\]
whereby the chronologically succeeding events $\{X_{t}=z\}$ and $\{X_{t+1}=z\}$
are stochastically dependent, likewise all chronologically succeeding
events for arbitrary $z\in E$.

A typical special case of the first case is the starting distribution
with an equal probability for every state. One realizes easily that
the requirement in first case is achieved, if a random walk without
borders has at least an uneven width, depth or height.

\emph{Case 2:} The starting distribution has no pair of states with
an uneven Manhattan-distance to each other.

A typical example is, if the drone takes off definitely from one particular
whereabouts. It needs at the most width + depth (+ height) time steps,
until approximately half of the states receive a positive probability.
(If $dw$ is even, there are exactly $\frac{dw}{2}$ such states for
the two-dimensional case; if $dw$ is uneven, the number of such states
switches between $\frac{dw-1}{2}$ and $\frac{dw+1}{2}$). Let this
point in time be notated as $m$. Now we pick a state $z$ and check
the independence property $P(X_{m+2}=z|X_{m}=z)=P(X_{m+2}=z)$ as
mentioned in corollary \ref{cor:Independence-of-Events}. The conditional
probability $P(X_{m+2}=z|X_{m}=z)$ is only defined, if $P(X_{m}=z)$
is greater than zero. Otherwise the events $\left\{ X_{m}=z\right\} $
and $\left\{ X_{m+2}=z\right\} $ would be stochastically independent,
which is true for up to half of the states in the bordered random
walk for the second case. Below we will show that both events are
even independent under the constraint $P(X_{m}=z)=1$.
\begin{eqnarray}
P(X_{m+2}=z|X_{m}=z) & = & P(X_{m+2}=z)\\
\Leftrightarrow P^{2}(z,z) & = & \sum_{i=1}^{|E|}P(X_{m}=i)P^{2}(i,z)\label{eq:3}\\
\Leftrightarrow P^{2}(z,z) & = & P(X_{m}=z)P^{2}(z,z)+\sum_{i=1,i\neq z}^{|E|}P(X_{m}=i)P^{2}(i,z)\\
 & \Leftarrow & P(X_{m}=z)=1
\end{eqnarray}
Hence it can be reasoned that both events are independent, if $P(X_{m}=z)\in\{0,1\}$.
This applies in the mentioned special case, if the drone departs definitely
from one particular state. Then $\left\{ X_{0}=z\right\} $, $\left\{ X_{1}=z\right\} $ and
$\left\{ X_{2}=z\right\} $ are independent for all states $z\in E$.
On the contrary, for different starting distributions, $\left\{ X_{0}=z\right\} $
and $\left\{ X_{2}=z\right\} $ can be stochastically dependent, because
at least two distinct states $z_{1}$ and $z_{2}$ provide the inequality
$0<\pi_{0}(z_{1}),\pi_{0}(z_{2})<1$.

To summarize, the events $\left\{ X_{m}=z\right\} $ and $\left\{ X_{m+2}=z\right\} $
are dependent if and only if 
\begin{equation}
P(X_{m}=z)\in\left(0,1\right)\label{eq:1}
\end{equation}
 and 
\begin{equation}
\left(1-P(X_{m}=z)\right)P^{2}(z,z)\neq\sum_{i=1,i\neq z}^{|E|}P(X_{m}=i)P^{2}(i,z).\label{eq:2}
\end{equation}
This should be valid in most examined Markov models at least for one
state $z\in E$ and $m\geq1$. Below, the second case is analysed
in more detail, in order to proof the dependence for some sub cases.
Note that for the missing sub cases the dependence characteristic
can be confirmed by checking the correctness of the above mentioned
formulas \ref{eq:1} and \ref{eq:2}, as soon as the concrete Markov
model is created.

\selectlanguage{english}%
\emph{Special Case of case 2:}\foreignlanguage{british}{ Bordered
Random Walk}

\selectlanguage{british}%
We will focus on the two-dimensional bordered random walk, firstly
with an uneven width and depth. The three-dimensional case proceeds
similarly. The starting distribution shall be symmetric for now, for
example the drone starts certainly in the middle of the grid. Let
the point of time $m$ be such that in the corners $1$, $w$, $(d-1)w+1$
and $dw$ the states have a positive probability. Notice that this
is only possible with an uneven width and depth. For the transition
probability in two steps one can easily deduce $P^{2}(1,1)=P^{2}(w,w)=P^{2}\left((d-1)w+1,(d-1)w+1\right)=P^{2}(dw,dw)=\frac{1}{3}$.
The unconditional probability $P(X_{m+2}=z)$ can be obtained by formula
\begin{equation}
P(X_{m+2}=z)=\sum_{i=1}^{|E|}P(X_{m}=i)P^{2}(i,z).
\end{equation}
This implicates for the top left corner the equality 
\begin{eqnarray*}
 &  & P(X_{m+2}=1)\\
 &  & =P(X_{m}=1)\cdot P^{2}(1,1)+P(X_{m}=3)\cdot P^{2}(3,1)\\
 &  & +P(X_{m}=2w+1)\cdot P^{2}(2w+1,1)+P(X_{m}=w+2)\cdot P^{2}(w+2,1)\\
 &  & =\frac{1}{3}P\left(X_{m}=1\right)+\frac{1}{9}P\left(X_{m}=3\right)+\frac{1}{9}P\left(X_{m}=2w+1\right)+\frac{1}{12}P\left(X_{m}=w+2\right)\\
 &  & =\frac{1}{3}
\end{eqnarray*}
 as a necessary and sufficient constraint for the independence. Due
to the symmetric starting distribution all corner states provide the
same probability $P(X_{m+2}=1)=P(X_{m+2}=w)=P\left(X_{m+2}=(d-1)w+1)\right)=P(X_{m+2}=dw)$
whose sum can be at most $1$. The outcome of this is $P(X_{m+2}=1)\leq\frac{1}{4}$
which is less than $\frac{1}{3}$, hence the condition \ref{eq:3}
is not pervaded and therefore the events are dependent.

In some settings the starting distribution cannot be modelled as symmetric
and the recent proof cannot be executed in an analogous manner, although
the probabilities for the corner states will converge to the same
positive value, if only even respectively uneven points in time are
considered. Below we will consider not only an asymmetric starting
distribution, but also a grid with at least one even dimension. At
least half of the $dw$ states have a positive probability on time
step $m$, if at least one corner state provides a positive probability.
Remember that this is only true for times $m$, $m+2$, $m+4$ and
so on in the event of uneven width and depth, but true for all times
$m$, $m+1$, $m+2$ and so on in the event of at least one even dimension.

According to this, the average state probability of those states with
a positive probability is not exceeding $\frac{2}{dw-1}$. It should
be clear that the corner states provide a probability less than this
in the long run because of the secluded location. So, for a particular
corner state $z^{*}$ the probability should be $P(X_{m+2}=z^{*})\leq\frac{2}{dw-1}$.
In grids of not less than $8$ states the probability $P(X_{m+2}=z^{*})$
becomes less than $\frac{2}{8-1}=\frac{2}{7}\leq\frac{1}{3}$ and
therefore independence becomes impossible. The two-step probability
remains at $\frac{1}{3}$, but in big exploration areas each state
is visited infrequently and the probability for being in a corner
state drops below $\frac{1}{3}$.

\emph{Case 3:} Boundless Random Walk

A boundless random walk enables the UAVs to leave the grid on one
side, in order to reappear on the other side. We will now examine
a grid with at least one uneven dimension. A drone can fly in one
direction and straight back in two time units. In a bordered random
walk the recurrence time to any state is always even, in contrast
in a boundless random walk with at least one uneven dimension a drone
can back to the starting point in an uneven number of time steps,
too. This reduces to the first case, by what the events are dependent.
\end{proof}
The objective is to calculate the expected coverage and therefore
it is not in someone's interest to have independent events $\left\{ X_{0}=z\right\} ,\left\{ X_{1}=z\right\} ,\left\{ X_{2}=z\right\} ,\ldots,$
but to have independent events $\left\{ X_{0}\neq z\right\} ,\left\{ X_{1}\neq z\right\} ,\left\{ X_{2}\neq z\right\} $
and so on.
\begin{claim}
\label{claim:The-events-}The events $A$ and $B$ are independent,
if and only if the complementary events $A^{c}$ and $B^{c}$ are
independent.\end{claim}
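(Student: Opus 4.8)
The plan is to reduce the statement to the two-event instance of the Definition of independence, namely that $A$ and $B$ are independent precisely when $P(A\cap B)=P(A)\cdot P(B)$, and then to relate $P(A^{c}\cap B^{c})$ to $P(A\cap B)$ by elementary set algebra. First I would apply De Morgan's law to write $A^{c}\cap B^{c}=(A\cup B)^{c}$, so that $P(A^{c}\cap B^{c})=1-P(A\cup B)$, and then use the inclusion--exclusion identity $P(A\cup B)=P(A)+P(B)-P(A\cap B)$ to obtain
\[
P(A^{c}\cap B^{c})=1-P(A)-P(B)+P(A\cap B).
\]
On the other hand, expanding the product of the complementary probabilities gives
\[
P(A^{c})\cdot P(B^{c})=(1-P(A))(1-P(B))=1-P(A)-P(B)+P(A)\cdot P(B).
\]

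Next I would compare these two expressions: they differ by exactly $P(A\cap B)-P(A)\cdot P(B)$. Hence $P(A^{c}\cap B^{c})=P(A^{c})\cdot P(B^{c})$ holds if and only if $P(A\cap B)=P(A)\cdot P(B)$, i.e. if and only if $A$ and $B$ are independent. This already yields one implication. For the converse I would simply observe that the roles of the pair $A,B$ and the pair $A^{c},B^{c}$ are interchangeable, since $(A^{c})^{c}=A$ and $(B^{c})^{c}=B$; applying the equivalence just proved to the pair $A^{c},B^{c}$ therefore shows that independence of $A^{c},B^{c}$ forces independence of $A,B$. Alternatively one can chain the auxiliary observation that independence of $A,B$ implies independence of $A,B^{c}$ (shown the same way via $P(A\cap B^{c})=P(A)-P(A\cap B)=P(A)(1-P(B))$), and then apply it once more to pass from $A,B^{c}$ to $A^{c},B^{c}$.

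I do not expect any genuine obstacle here; the computation is routine. The only point to handle with a little care is that the argument should be carried out using the product-form definition of independence rather than the conditional-probability reformulation of Corollary \ref{cor:Independence-of-Events}, so that the degenerate cases $P(A)\in\{0,1\}$ or $P(B)\in\{0,1\}$ require no separate discussion: the two displayed identities are valid for arbitrary events, irrespective of whether any of the conditional probabilities $P(A\mid B)$, $P(A^{c}\mid B^{c})$ are defined.
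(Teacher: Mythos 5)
Your proposal is correct and follows essentially the same route as the paper: De Morgan plus inclusion--exclusion shows that $P(A^{c}\cap B^{c})$ and $P(A^{c})\cdot P(B^{c})$ differ exactly by $P(A\cap B)-P(A)\cdot P(B)$, so the two product identities are equivalent. The extra remarks on the converse are harmless but unnecessary, since the equivalence already gives both directions.
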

\begin{proof}
As argued before, the events $A$ and $B$ are independent, if and
only if $P(A\cap B)=P(A)\cdot P(B)$ is valid.

\begin{eqnarray*}
P(A^{c}\cap B^{c}) & = & P((A\cup B)^{c})=1-P(A\cup B)=1-(P(A)+P(B)-P(A\cap B))\\
P(A^{c})\cdot P(B^{c}) & = & (1-P(A))(1-P(B))=1-P(A)-P(B)+P(A)P(B)\\
 & = & 1-(P(A)+P(B)-P(A)P(B))
\end{eqnarray*}
The equation $P(A^{c}\cap B^{c})=P(A^{c})\cdot P(B^{c})$ is true,
if and only if $P(A\cap B)=P(A)\cdot P(B)$ is valid.\end{proof}
\begin{thm}
\label{thm:Dependence-of-Coverage}Dependence of Coverage Events (2)
\end{thm}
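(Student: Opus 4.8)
The plan is to derive Theorem~\ref{thm:Dependence-of-Coverage} as an almost immediate consequence of Proposition~\ref{prop:Dependence-of-Coverage} together with Claim~\ref{claim:The-events-}. The one structural fact I will use is that a finite family of events is stochastically independent only if \emph{every} pair of events taken from it satisfies the product rule; hence it suffices to exhibit a single dependent pair among the complementary events $\{X_{0}\neq z\},\ldots,\{X_{n}\neq z\}$.

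First I would go back to the \emph{proof} of Proposition~\ref{prop:Dependence-of-Coverage} (not merely its statement, which speaks of the whole tuple) and extract from it a concrete witnessing pair: for a suitable state $z$ and indices $0\leq s<t\leq n$ the events $\{X_{s}=z\}$ and $\{X_{t}=z\}$ are stochastically dependent. In Cases~1 and~3 this is the consecutive pair $t=s+1$, for which $P(\{X_{s}=z\}\cap\{X_{s+1}=z\})=0$ while both marginals are positive; in Case~2 it is the pair $t=s+2$, whose dependence is exactly the criterion recorded in formulas~\ref{eq:1} and~\ref{eq:2}. I would then apply Claim~\ref{claim:The-events-} to this pair: since $\{X_{s}=z\}$ and $\{X_{t}=z\}$ fail to be independent, so do their complements $\{X_{s}\neq z\}$ and $\{X_{t}\neq z\}$. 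Both of these complements occur in the family $\{X_{0}\neq z\},\ldots,\{X_{n}\neq z\}$, so that family contains a dependent pair and therefore cannot be jointly independent, which is precisely the asserted dependence of the (non-)coverage events.

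The only point that deserves attention is conceptual rather than computational: one must keep ``not jointly independent'' distinct from ``pairwise dependent.'' Proposition~\ref{prop:Dependence-of-Coverage} is phrased for the $(n+1)$-tuple, but each case of its proof already produces a genuinely \emph{pairwise}-dependent pair, so the two-event statement of Claim~\ref{claim:The-events-} applies directly and no extension of that claim to $k$-tuples is needed. I would end by noting that this dependence is exactly why the non-coverage probability $P(X_{0}\neq z,\ldots,X_{n}\neq z)$ cannot be evaluated as the product $\prod_{t=0}^{n}P(X_{t}\neq z)$, which motivates the Markov-model computation of the expected coverage in the next subsection. The main ``obstacle'', such as it is, is merely making the appeal to the cases of Proposition~\ref{prop:Dependence-of-Coverage} precise enough that the extracted indices satisfy $s<t\leq n$ for the range of $n$ claimed in the theorem.
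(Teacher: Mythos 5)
Your proposal is correct and follows the same route as the paper, whose proof of this theorem is literally the one-liner ``follows from Proposition~\ref{prop:Dependence-of-Coverage} and Claim~\ref{claim:The-events-}.'' You in fact supply the detail the paper glosses over --- that Claim~\ref{claim:The-events-} is stated only for pairs, so one must extract a concretely dependent pair from the \emph{proof} of the proposition rather than from its tuple-level statement --- which makes your write-up a more careful version of the intended argument.
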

The events $\{X_{0}\neq z\},...,\{X_{n}\neq z\}$ are stochastically
dependent for $n\geq3$ and at least one state $z$.
\begin{proof}
The proof follows directly from proposition \ref{prop:Dependence-of-Coverage}
and claim \ref{claim:The-events-}.
\end{proof}
It was shown that the considered events $\{X_{0}\neq z\},...,\{X_{n}\neq z\}$
are stochastically dependent generally. Recall the definition
\[
C_{n,z}=\begin{cases}
1 & if\,X_{0}=z\vee X_{1}=z\vee...\vee X_{n}=z\\
0 & else
\end{cases}
\]
for the coverage of state $z$ up to time step $n$. If one takes
no notice of the dependence characteristic and assumes independence,
the coverage probability of a single state is easy to calculate. The
just made assumption leads to the coverage probability
\begin{equation}
P\left(C_{n,z}=1\right)=1-\prod_{t=0}^{n}\left(1-P(X_{t}=z\right),\label{eq:5}
\end{equation}
like written in \cite{key-1}. However, this formula cannot be true
in general, because it disregards the presence of theorem \ref{thm:Dependence-of-Coverage}.
In order to take the dependence characteristic into account, one must
specify appropriate Markov chains. Let the first arrival time in state
$z$ be $A_{z}$ with the relation
\[
A_{z}:=n\,\Longleftrightarrow\,X_{0}\neq z,X_{1}\neq z,...,X_{n-1}\neq z,X_{n}=z.
\]
State $z$ is visited until time $n$, if and only if the first arrival
in this state is at the latest on point $n$ of time. By reason of
\begin{equation}
P(C_{n,z}=1)=P(A_{z}\leq n)
\end{equation}
the coverage probabilities can be reduced to the calculation of arrival
probabilities. In order to receive the sought-after probabilities,
we need to create a new Markov chain for every state in the considered
random walk model. Subject to state $r\in E$ the associated Markov
chain is named $M_{r}$. The just introduced new Markov chains $M_{1},\ldots M_{|E|}$
accord with the Markov chain $M$ (for example as in table \ref{tab:Transition-matrix-of})
nearly complete, but there are minor differences. State space and
starting distribution coincide perfectly, but there is a change in
row $r$ of the transition matrix $M_{r},r=1,\ldots|E|$. The transition
matrix $P_{r}$ contains a one in row and column $r$ and zeros elsewhere
in row $r$, as is evident in table \ref{tab:Transition-matrix-}.

\begin{table}
$\begin{pmatrix} & \frac{1}{2} &  & \frac{1}{2}\\
\frac{1}{3} &  & \frac{1}{3} &  & \frac{1}{3}\\
 & \frac{1}{2} &  &  &  & \frac{1}{2}\\
\frac{1}{3} &  &  &  & \frac{1}{3} &  & \frac{1}{3}\\
 &  &  &  & 1\\
 &  & \frac{1}{3} &  & \frac{1}{3} &  &  &  & \frac{1}{3}\\
 &  &  & \frac{1}{2} &  &  &  & \frac{1}{2}\\
 &  &  &  & \frac{1}{3} &  & \frac{1}{3} &  & \frac{1}{3}\\
 &  &  &  &  & \frac{1}{2} &  & \frac{1}{2}
\end{pmatrix}$

\medskip{}

\caption{Transition matrix $P_{5}$ of the bordered symmetric random walk in
a $3\times3$ - grid with absorbing state $5$.\label{tab:Transition-matrix-}}
\end{table}

This change effects an altered behaviour of the Markov chain $M_{r}$
compared to $M$. State $r$ is called absorbing, because the stochastic
process remains in this state as soon as it has visited this state
for the first time. Until this event, the behaviour of Markov chain
$M_{r}$ is identical to that of the original Markov chain $M$. In
order to calculate the probability $P(A_{r}\leq n)$ in the original
Markov chain $M$, we can go over to the new Markov chain $M_{r}$
which will be used henceforward. The first arrival time does not change,
because both Markov chains behave exactly identical till the first
arrival. In the new Markov model $M_{r}$ it meets to check the location
of the random walk on point $n$ of time, because 
\begin{equation}
A_{r}\leq n\Longleftrightarrow X_{n}=r
\end{equation}
holds. The probability of the latter term can be obtained, as customary
in Markov chains, by multiplication of the starting distribution with
the exponentiated transition matrix, thus
\begin{equation}
P(C_{n,r}=1)=(\pi_{0}P_{r}^{n})_{r}\label{eq:4}
\end{equation}
and the $r$-th value of this row vector is selected. This can be
done for all states $r\in E$, but remember that every state $r$
requires its own transition matrix $P_{r}$. The expected coverage
of the random walk after $n$ time steps is the expected value 
\begin{equation}
E\left(\frac{\sum_{r\in E}C_{n,r}}{|E|}\right)=\frac{\sum_{r\in E}E\left(C_{n,r}\right)}{|E|}=\frac{\sum_{r\in E}P\left(C_{n,r}=1\right)}{|E|}=\frac{\sum_{r\in E}\left(\pi_{0}P_{r}^{n}\right)_{r}}{|E|}
\end{equation}
divided by the number of states in the Markov model. It is about the
ratio of the states in the Markov chain that has been visited after
$n$ time steps, that is to be expected. Altogether the handling with
dependent events leads to a much more complicated solution of the
expected coverage as though one ignores this pre-assumption.

\subsection{Independent Mobility}

So far, the expected coverage of a single UAV was calculated. However,
one of the greatest advantages of the operation of UAVs in contrary
to helicopters is that the former can be used in groups, even at places
with little room. Together they can explore the target area faster,
especially if they cooperate. For the sake of convenience we limit
on independent mobility, so that each drone uses the same Markov model
$M$ for exploring, independent of the movement of the other UAVs.
Thereby the coverage probability in \ref{eq:4} changes to
\begin{equation}
P\left(C_{n,r}^{multi}=1\right)=1-\left(1-P\left(C_{n,r}=1\right)\right)^{|E|}=1-\left(1-\left(\pi_{0}P_{r}^{n}\right)_{r}\right)^{|E|}.
\end{equation}

\subsection{Comparison of Analytic and Simulation Results}

The analytic solution of the expected coverage was deduced on the
whole. The analytic results are confirmed by MCMC simulations which
can be easily coded once the transition probabilities are available.
However, there is a great gap between the analytic formula \ref{eq:5},
that ignores the dependence of events, and the correct analytic results
\ref{eq:4} as well as the simulation results. In \cite{key-1} it
is concluded that this peculiarity is due to error propagation. This
argument can be excluded, because the correct formula \ref{eq:4}
and the simulation results show no deviation and therefore are in
agreement. On grounds of numerical computability, MCMC- simulations
should be preferred opposite to the analytic formula, because the
simulations can provide approximately exact results and can handle
Markov models with huge state spaces considerably better.

\section{Conclusions}

In this work, we introduced Markov chains with and without borders
in two and three dimensions. We deduced an analytic solution of the
expected coverage in case of independent mobility. Though it was proven
that certain events are not independent, which complicates the calculation
vastly. The non-consideration of this fact leads to an incorrect solution
for the expected coverage which cannot be explained by numerical inaccuracy. Hence, MCMC simulations provide still a good approximation for the
analytic results.

\section*{Appendix: Sequencing in Three-Dimensional Grids}

The sequencing and computation of the transition matrix for the bordered
symmetric random walk in three-dimensional grids are programmed in
R \cite{key-3}.

 \begin{lstlisting}
# The states are sequenced firstly on positive x-direction,
# then in positive y-direction and eventually in positive z-direction.
# That means, the node with the coordinates (x,y,z)
# holds the number x+w(y-1)+wd(z-1).

calculateTransitionMatrix3d = function ( width , depth , height ) {
  w = width
  d = depth
  h = height
  numberOfStates = w*d*h
  P = matrix ( 0 , nrow = numberOfStates , ncol = numberOfStates )
  
  nodePoints = matrix ( 0 , nrow = numberOfStates , ncol = 3 )
  nodeNumber = 0
  for ( z in 1:h ) {
    for ( y in 1:d ) {
      for ( x in 1:w ) {
        nodeNumber = nodeNumber + 1
        nodePoints [ nodeNumber , ] = c (x,y,z)
      }
    }
  }
  
  calculateNeighbourNodes = function ( x , y , z ) {
    neighbourNodes = matrix ( nrow = 0 , ncol = 3)
    if ( x > 1 )
      neighbourNodes = rbind ( neighbourNodes , c ( x-1 , y , z ) )
    if ( x < w )
      neighbourNodes = rbind ( neighbourNodes , c ( x+1 , y , z ) )
    if ( y > 1 )
      neighbourNodes = rbind ( neighbourNodes , c ( x , y-1 , z ) )
    if ( y < d )
      neighbourNodes = rbind ( neighbourNodes , c ( x , y+1 , z ) )
    if ( z > 1 )
      neighbourNodes = rbind ( neighbourNodes , c ( x , y , z-1 ) )
    if ( z < h )
      neighbourNodes = rbind ( neighbourNodes , c ( x , y , z+1 ) )
    
    return ( neighbourNodes )
  }
  
  neighbourNodes = list ()
  nodeDegrees = c()
  nodeNumber = 0
  for ( z in 1:h ) {
    for ( y in 1:d ) {
      for ( x in 1:w ) {
        nodeNumber = nodeNumber + 1
        neighbourNodes [[ nodeNumber ]]
			= calculateNeighbourNodes ( x , y , z )
        nodeDegrees [[ nodeNumber ]]
			= nrow (neighbourNodes [[ nodeNumber ]] )
      }
    }
  }
  
  nodeTransformation = function ( x , y , z ) {
    return ( x + w * (y-1) + w * d * (z-1) )
  }
  
  transformedNeighbourNodes = list()
  nodeNumber = 0
  for ( z in 1:h ) {
    for ( y in 1:d ) {
      for ( x in 1:w ) {
        res = c()
        nodeNumber = nodeNumber + 1
        for ( i in 1:nrow ( neighbourNodes [[ nodeNumber ]] ) ) {
          value1 = neighbourNodes [[ nodeNumber ]] [i,1]
          value2 = neighbourNodes [[ nodeNumber ]] [i,2]
          value3 = neighbourNodes [[ nodeNumber ]] [i,3]
          res = c ( res , nodeTransformation
			( value1 , value2 , value3 ) )
        }
        transformedNeighbourNodes [[ nodeNumber ]] = res
      }
    }
  }
  
  for ( nodeNumber in 1:numberOfStates ) {
    P [ nodeNumber , transformedNeighbourNodes [[ nodeNumber ]] ]
		= 1 / nodeDegrees [ nodeNumber ]
  }
  
  return ( P )
}
\end{lstlisting}
\end{document}